\tikzset{
>=stealth',
help lines/.style={dashed, thick},
axis/.style={<->},
important line/.style={thick},
connection/.style={thick, dotted},
}
\newtheorem{thm}{Theorem}[section]
\newtheorem{defn}[thm]{Definition}
\newtheorem{prop}[thm]{Proposition}
\newcommand{\ue}{u_\epsilon}
\newcommand{\pe}{p_\epsilon}
\def\RR{{\mathbb{R}}}
\def\div{ {\rm div}}
\def\Uadmis{{\mathcal U}_{ad}}
\newcommand{\Rbb} {\mathbb{R}}
\begin{document}

\title{Optimal design of a micro-tubular fuel cell}
\author{
G. Delgado\textsuperscript{1}
\footnote{Current address: IRT SystemX, Palaiseau, France (gabriel.delgado@irt-systemx.fr)}
.
}
\maketitle
\begin{center}
\emph{
\textsuperscript{1} 
CMAP, \'Ecole Polytechnique, Palaiseau, France.
}

\end{center}

\begin{abstract}
We discuss the problem of the optimal design of a micro-tubular fuel cell applying an inverse homogenization technique. Fuel cells are extremely clean and efficient electrochemical power generation devices, made up of a cathode/electrolyte/anode structure, whose energetic potential has not being fully exploited in propulsion systems in aeronautics due to their low power densities. Nevertheless, thanks to the recent development of additive layer manufacturing techniques (3D printing), 
complex structures usually impossible to design with conventional manufacturing techniques 
can be constructed with a low cost, allowing notably to build porous or foam-type structures for fuel cells. We seek thus to come up with the micro-structure of an arrangement of micro-tubular cathodes which maximizes the contact surface {subject} to a pressure drop and a permeability constraint. The optimal periodic design (fluid/solid) emerges from the application of a shape gradient algorithm coupled to a level-set method for the geometrical description of the corresponding cell problem.\\

\end{abstract}

\noindent
{\bf Keywords:} 
Shape and topology optimization;
Solid oxide fuel cells; Level-set method; Homogenization.

\section{Introduction}

Fuel cells are energy conversion devices which can continuously convert chemical energy into electrical energy and heat, without involving direct combustion. This feature offers many advantages over traditional power sources such as improved efficiency, greater fuel diversity, high scalability, no moving parts (hence less noise and vibration) and lower impact on the environment \cite{gou2009fuel}.
 
A fuel cell is a fairly simple device, mainly composed of three elements: an anode, a cathode, and an electrolyte between the two electrodes. The two electrodes are connected together by an electrical circuit. On the surface of these electrodes, electrochemical reactants react through half-redox reactions, producing (or consuming) ions, electrons, and in most cases, heat. Ions pass through the electrolyte meanwhile  electrons are ``channeled" in the electric circuit and then routed to the second electrode to be consumed.  

The efficiency of current fuel cells, which ranges from $40\%$ to $60\%$, is higher than thermal systems such as gas turbines  
since their operation is not constrained by any theoretical thermodynamic limitation as the maximum Carnot efficiency. Furthermore, coupled to a gas turbine at a high temperature ($800-1000^o$C), the spawned hybrid system can achieve really high efficiencies (near $85\%$), producing electricity from the waste of heat. This feature makes high temperature operating fuel cells an interesting complement to aircraft engines \cite{palsson2000combined,singhal2000advances,samuelsen2004fuel,roth2010fuel,steffen2005solid}.
\begin{figure}[h]
\centering
\includegraphics[trim=0 2cm 0 5cm,clip,width=0.9\textwidth]{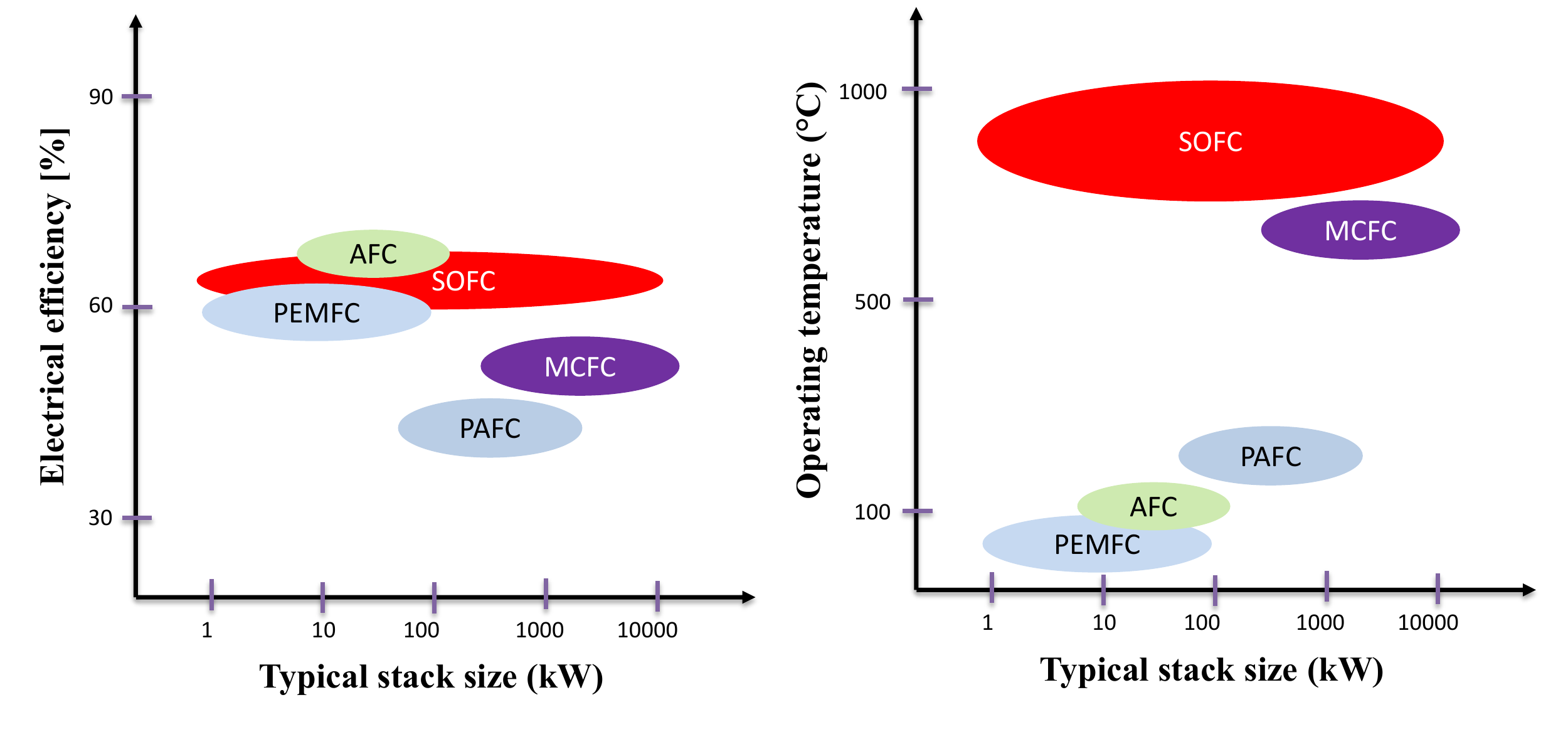}
\caption{Different types of fuel cells: Polymeric Electrolyte Membrane (PEMFC), Alkaline (AFC), Phosphoric Acid (PAFC), Molten Carbonate (MCFC) and Solid Oxide (SOFC). Reprinted with permission from \cite{etienne}.}\label{FC_performances}
\end{figure}

Among the different types of fuel cells (see Fig. \ref{FC_performances}), one can recognize the Solid Oxide Fuel Cells (SOFC) as a particularly appealing model for high temperature applications in hybrid systems.
SOFC possesses various advantages w.r.t other fuel cells such as a solid electrolyte allowing different geometries and shapes, good performance and durability, and high operating temperature ($800-1000^o C$) for reforming. This last attribute has the potential of hybridization with a gas turbine, {as it is shown} in Fig. \ref{hybrid}. 

\begin{figure}[h]
\centering
\includegraphics[scale=0.5]{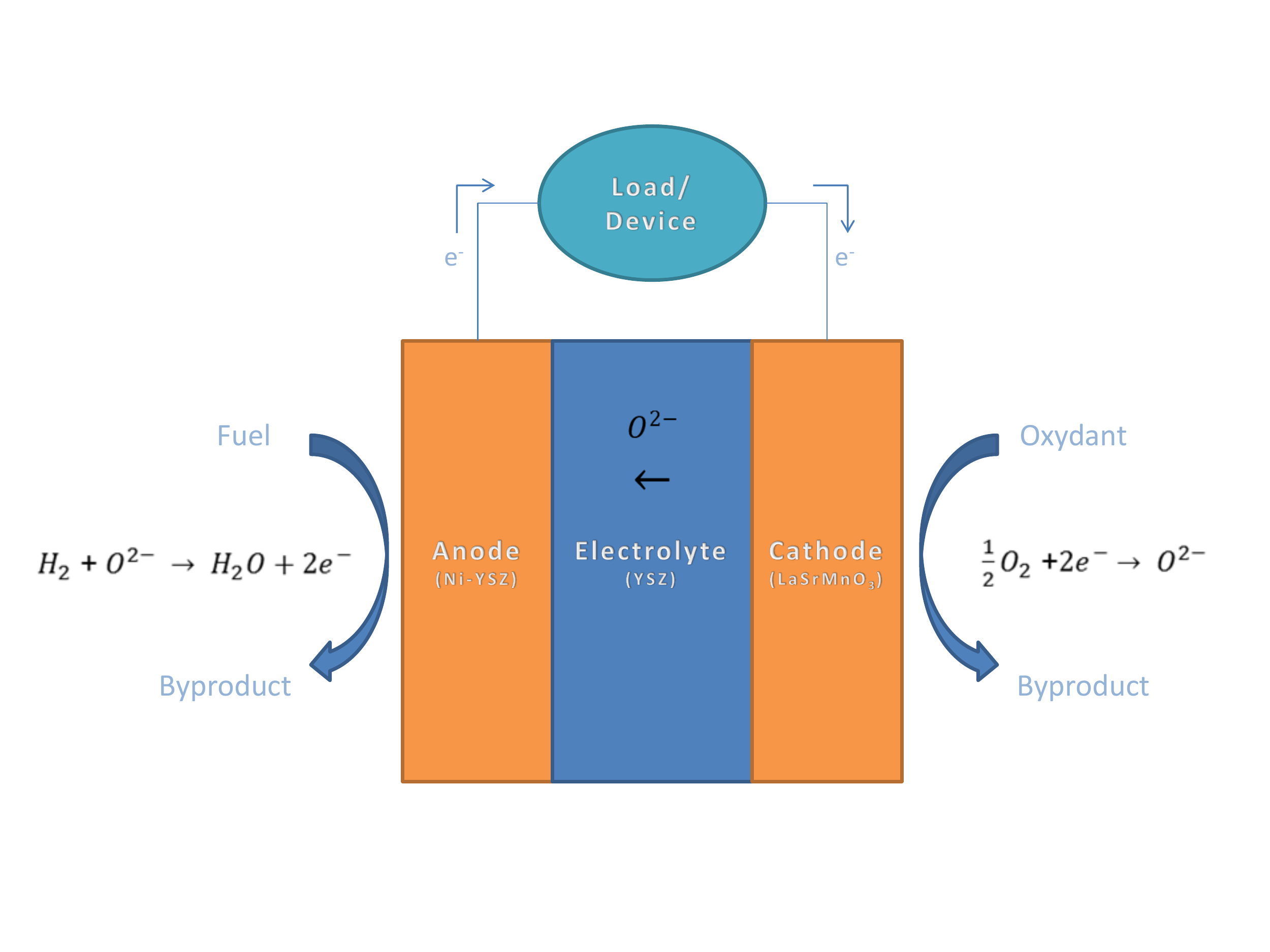}
\caption{Triple structure of a solid oxide fuel cell. The $O_2$ from the air is consumed in the cathode, meanwhile the $O^{2-}$ ions liberated from the reduction reaction travel through the electrolyte to the anode, where the oxidation reaction of the fuel ($H_2$) takes place, liberating heat, water and electricity. Reprinted with permission from \cite{etienne}.}\label{sofc}
\end{figure}

\begin{figure}[hbt]
\centering
\includegraphics[trim=0 0.5cm 0 4cm,clip,width=0.8\textwidth]{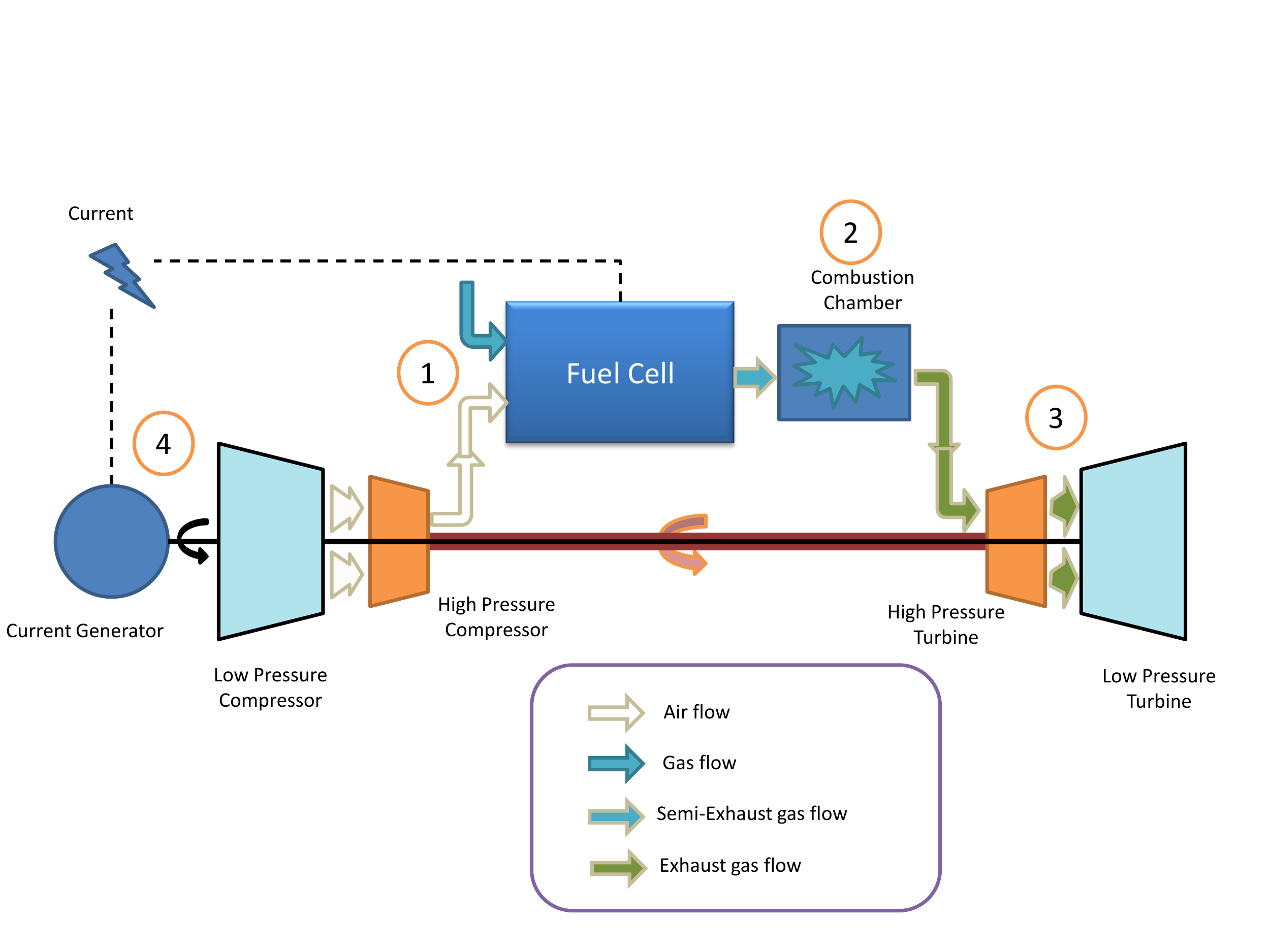}
\caption{Example of an hybrid system. Reprinted with permission from \cite{etienne}.}\label{hybrid}
\end{figure}

The cathode of an SOFC is usually an alloy of lanthanum, strontium, and manganese oxide. On its surface takes place the half reduction reaction of oxygen, producing an oxygen ion. The electrolyte is e.g. made of YSZ (yttrium stabilized zirconia). This allows the ion transport to the anode (nickel mixed with YSZ), in contact with which it reacts with hydrogen to produce water and electrons. Finally, these electrons are conducted and used by an electric device (Figure \ref{sofc}). Both the anode and the cathode must be very porous so as to allow the transport of the fuel and the oxygen, respectively. 
%

SOFC can be designed following many geometric configurations. However, the most common designs are the planar and the tubular ones. Each design offers advantages and drawbacks depending of the application requirement. Planar design configuration has a low physical component volume profile and short current path between single cells, allowing higher power densities, meanwhile tubular designs have e.g. high thermo-mechanical properties, simple sealing requirements and good thermal shock resistance \cite{de2011production}.  

Despite of the impressive energy efficiencies achieved by the SOFC, the foregoing geometries lack of an essential property required by any aircraft system: high gravimetric and volumetric energy densities. Indeed, fuel cells are still too heavy to propel any large aircraft since they have a lower power density when compared with conventional turbines \cite{hordeski2009hydrogen}. For instance, the gravimetric power density (kW/kg) of a SOFC compared to the turbojet CFM-56 (one of the most popular turbojets in the world and often used as a reference for studies of this type) is at least five times smaller.

In order to improve the fuel cell efficiency, various alternatives have emerged, such as miniaturization of the structure through a micro-tubular configuration \cite{de2011production} and the use of porous materials (foam) in the design of some components (e.g. bipolar/end plates) \cite{kumar2003modeling}. In general these techniques take advantage 
of recent developments in additive layer manufacturing or 3D printing procedures, which allow the construction of extremely complicated 3D structures from a CAD model.

\begin{figure}[h]
\centering
\includegraphics[trim=0 1cm 0 3cm,clip,scale=0.4]{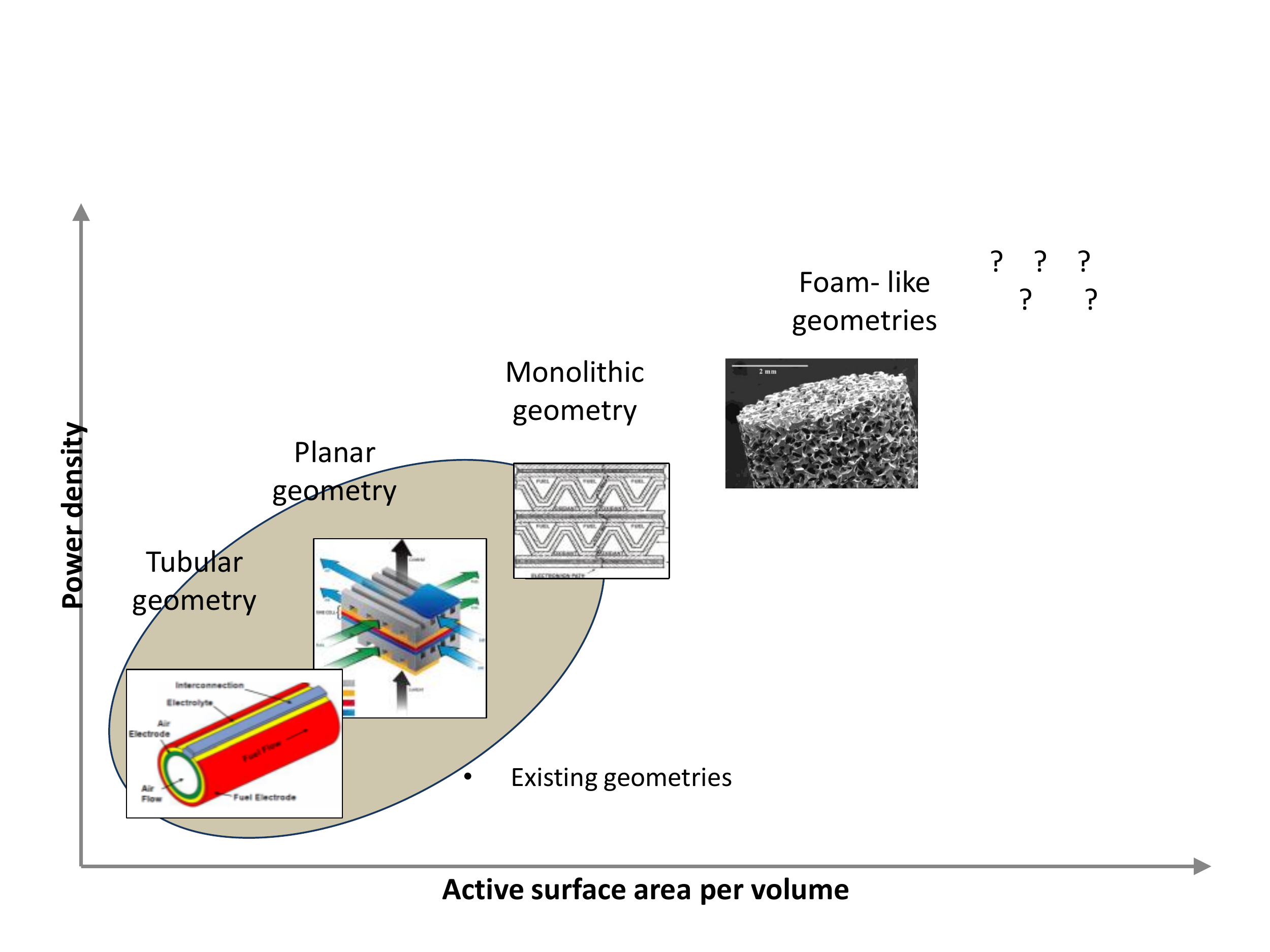}
\caption{Evolution of power density vs active surface of SOFC. Reprinted with permission from \cite{etienne}.}\label{evolution}
\end{figure}

Thus, the design of fuel cells with very small features, leading notably to periodic patterns, becomes a foremost challenge in the construction of future technologies (see Figure \ref{evolution}). An adapted tool for this purpose is inverse homogenization. The homogenization method for topology optimization, which has been successfully implemented in structural optimal design (see e.g. \cite{allaire2002shape,bendsoe1995methods,cherkaev2000variational,allaire1993optimal,bendsoe1988generating}), consists in admitting composites regions where the void appears at a micro-scale. The theoretical foundations are given in \cite{kohn1986optimal,tartar2009general}. 
Meanwhile in homogenization the effective properties of a material are found from the micro-structure, in inverse homogenization the micro-structure does not exist initially but we seek to come up with a micro-structure with prescribed or extreme homogenized properties. The design of materials with extreme or prescribed properties using inverse homogenization in elasticity, fluid mechanics, wave propagation, etc., is well known in the literature \cite{cadman2013design,sigmund1994materials,sigmund1994design,haslinger1995optimum,sigmund1997design,zhou2008computational,guest2007design,bendsoe2004topology}.

In the process of finding the optimal micro-structure, a topology optimization problem must be solved within the so-called cell problem. The cell problem defines the link between the physical macroscopic (effective) properties and the micro-structure characterized by a certain geometry. Among the most popular methods in topology optimization for fluid flow problems, the level-set method for shape optimization \cite{allaire2004structural,burger2003framework,sethian2000structural,wang2003level} arises as a viable, robust and efficient alternative to more standard density-based approaches \cite{allaire2002shape,allaire1997shape,bendsoe1995methods}. First introduced in \cite{osher.sethian}, the level-set method has the advantage of tracking the interfaces on a fixed mesh, easily managing topological changes without any need of re-meshing. Allied to the Hadamard method of shape differentiation, the level-set approach gives a better description and control of the geometrical properties of the interface without need of any intermediate density, avoiding typical drawbacks such as intermediate density penalization and possible spurious physical behavior during the optimization process.

In the present article we propose an optimal configuration of a micro-tubular fuel cell, whose cathode constitutes a tubular periodic structure designed by inverse homogenization. This particular structure leads to a maximal surface micro-structure subjected to a permeability constraint for the air phase and a pressure drop constraint for the fuel phase. In Section \ref{fuel_cell_setting_problem}, the main features of the problem are described, namely the physical modeling of a porous fuel cell, the homogenization of the equations governing the system and the optimization problem we contemplate to solve. Then the shape gradients of the functionals involved in the optimization problem are recalled in Section \ref{shape_deriv_fuel_cell}. Finally an example of an optimal periodical micro-structure (fluid/solid) is detailed in Section \ref{numerical_results_fuel_cell}, emerging from the application of the level-set method for topology optimization to the corresponding cell problem.

\section{Problem setting}
\label{fuel_cell_setting_problem}
\subsection{Physical modeling of a porous fuel cell}

A generic SOFC, regardless of the geometrical configuration (planar, tubular, monolithic, etc), is always composed of two porous electrodes (anode and cathode), a dense electrolyte, an anodic and cathodic gas channel and two current collectors. However, for the sake of simplicity, we propose to study a reduced model of a micro-tubular SOFC \cite{de2011production}, as described in {Figure \ref{tubular_fuel_cell}}, focusing on the three former components. Thus, we consider a periodically perforated domain, 
where the air flows freely around a periodic collection of cathode/electrolyte/anode tubes. The air cannot penetrate inside the tubes and the fuel flows orthogonally through them.
The air transports several species but we concentrate on the $O_2$, which is consumed in the electrochemical reaction on the surface of the cathode.

\begin{figure}[h]
\begin{center}
\fbox{\includegraphics[trim=0 1cm 0 3cm,clip,scale=0.4]{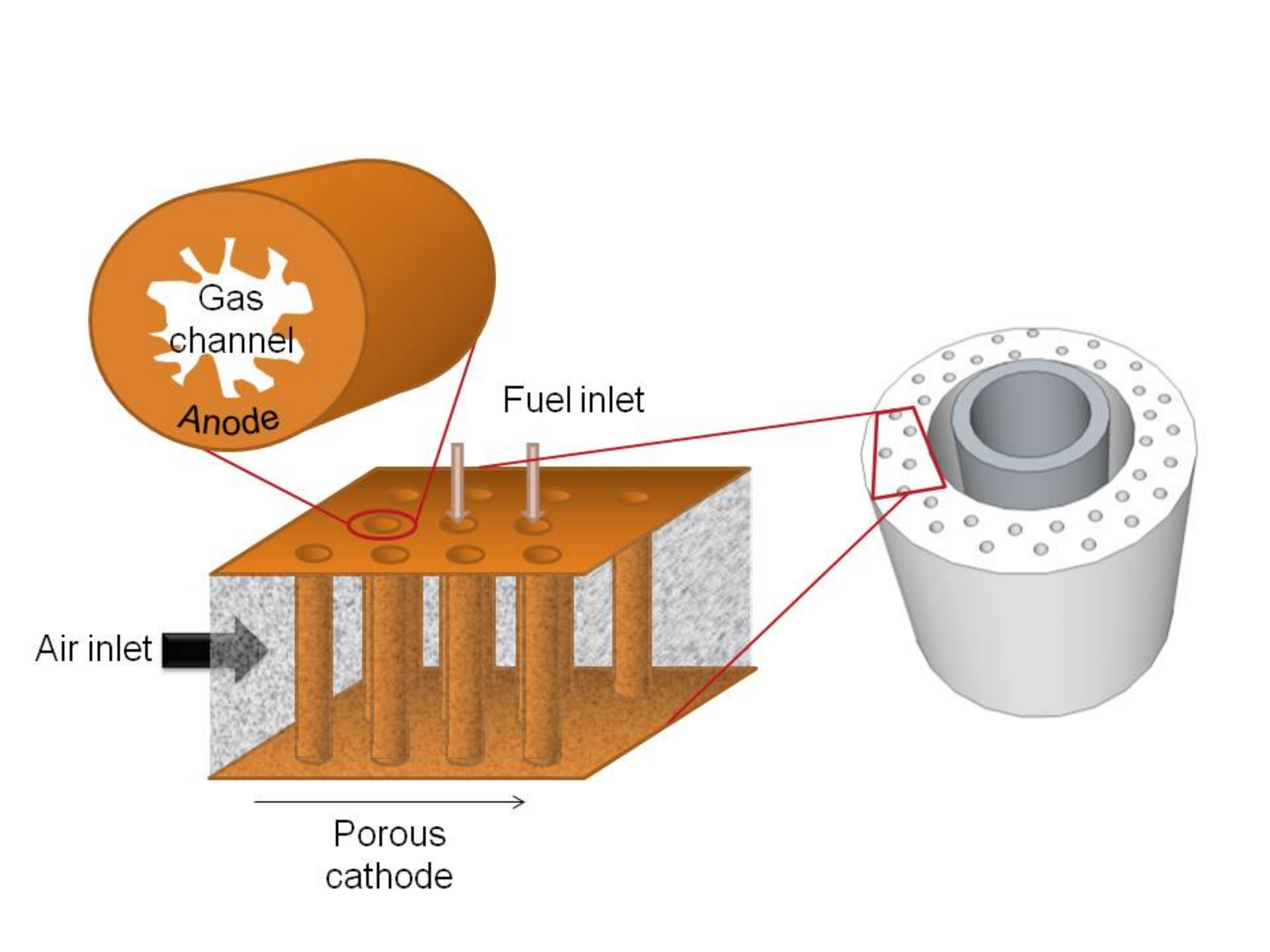}}
\end{center}
\caption{Micro-tubular design with arbitrary shaped tubes proposed in \cite{etienne} (reprinted with permission). See also the similar micro-tubular SOFC design of AIST \cite{national_institute}.}\label{tubular_fuel_cell}
\end{figure}

 \paragraph{Assumptions.} Now we list the main physical assumptions of our model, similarly to \cite{bove2008modeling}{.}
\begin{enumerate}[(1)]
\item \textit{Steady state velocity of the air (also called slow steady state)}.

\item \textit{Laminar and incompressible flow.} This assumption derives from the low gas speed in the SOFC gas channels, where the density variation of each specie is not related to compressions/expansions, but rather caused either internally by heat release of chemical reactions or externally by wall heating and by mass variation. Furthermore, considering that the gas speed in SOFC gas channels is always very low (low Mach number $<0.3$), it is a common practice to assume a laminar flow in the gas channels, meaning that the non-linear term in the momentum conservation equation (inertia term) is negligible with respect to the viscous term.

\item \textit{Isothermal state.} Since the temperature of the SOFC is quite high ($800^o-1000^o$C), the local variations can be underestimated. This allows in particular to avoid the dependence of the diffusion tensor and the reaction ratio of $O_2$ with respect to the temperature.

\item \textit{The electrochemical reactions are confined to the electrode-electrolyte interface.} The place where the electrochemical reaction takes place, the so-called triple-phase-boundary, is the site where ions, electrons and gas coexist, thus enabling the redox reactions. Since this place represents a small portion of the entire electrode domain and the high electronic conductivity of the electrodes compared to the ionic conductivity, the redox reactions are likely to take place very close to the electrode-electrolyte interface.

\item \textit{The cathode/electrolyte/anode is a lumped structure so it can be treated as one interface.} 

\end{enumerate}

Generally speaking, the physic of the SOFC is modeled as a complex system where chemical reactions, electrical conduction, ionic conduction, gas phase mass transport, and heat transfer take place simultaneously and are tightly coupled \cite{hosseini2013cfd,hussain2006mathematical,danilov2009cfd}. Nevertheless, we simplify the physical description by focusing only on the gas mass transport and the chemical reaction on the cathode.
\begin{figure}[h]
\centering
\includegraphics[width=0.4\textwidth]{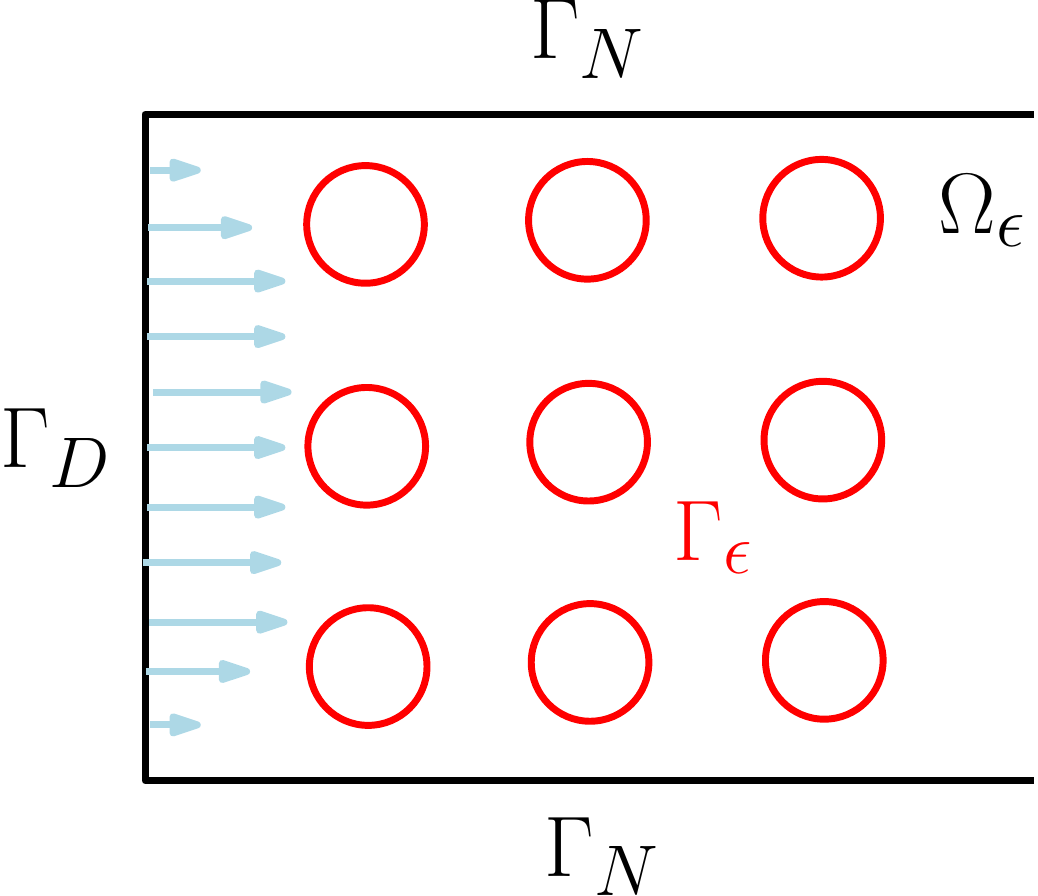}
\caption{Transversal section of the porous SOFC and boundary conditions. The air flows from the left to the right. The fuel penetrates orthogonally through the figure inside the red tubes. $\Gamma_\epsilon$ represents the anode/electrolyte/cathode lumped structure.}\label{porous_medium}
\end{figure}

{Consider Fig. \ref{porous_medium}.} Let $\Omega_\epsilon \subset \Rbb^d$ be the porous volume of the fuel cell where the air flows freely, delimited by the boundaries $\Gamma_D$, $\Gamma_N$ and $\Gamma_{\epsilon}$. The air is injected through $\Gamma_D$, meanwhile $\Gamma_N$ is impermeable. A periodic arrangement of anode/electrolyte/cathode tubes constitutes the interface $\Gamma_\epsilon$. The redox reaction takes place on $\Gamma_\epsilon$ and the adimensional parameter $\epsilon$ corresponds to the ratio between the characteristic size of each tube and a macroscopic characteristic length. 

\paragraph{Conservation and constitutive laws.}
 The fluid adimensionalized equations (mass conservation and momentum conservation) plus the specie ($O_2$) diffusion-convection-(surface)reaction adimensionalized equation respectively read \cite{auriault2010homogenization}
 %

\begin{equation}
\label{equation_eps}
\left\{
\begin{array}{ll}
\text{div}(\ue)=0  &  x\in \Omega_\epsilon,\\
-\epsilon^2 \mu \Delta \ue = \nabla \pe & x\in \Omega_\epsilon,\\
\frac{\partial X_\epsilon}{\partial t}+ \ue\cdot \nabla X_\epsilon=\lambda \Delta X_\epsilon 
 & t>0, x\in \Omega_\epsilon,\\
 X_\epsilon =X_{\text{init}} & t=0, x\in \Omega_\epsilon,
\end{array}
\right.
\end{equation}
where $X_\epsilon$, $X_{\text{init}}$ represent the current and the initial concentrations of $O_2$ in the air, $u_\epsilon$ is the local velocity of the air, $p_\epsilon$ the local pressure, $\mu$ the viscosity of the air and
$\lambda$ the diffusion coefficient. The parameter $\epsilon$ as it was explained above, corresponds to the ratio between the characteristic size of each tube and a macroscopic characteristic length. We remark that in this model, the diffusion and the convection of the $O_2$ in the  transport equation are equilibrated at the macro-scale.\\

%

%
\paragraph{Boundary conditions}

\begin{equation}
\label{boundary_conditions_eps}
\left\{
\begin{array}{lll}
\ue=u_D & X_\epsilon=X_D & t>0,x\in \Gamma_D,\\
\ue\cdot n=0 & D_\epsilon \nabla X_\epsilon \cdot n=0 & t>0, x\in \Gamma_N,\\
\ue=0 &  \nabla X_\epsilon \cdot n=-\epsilon \frac{1}{4e}\mathcal{R}(X_\epsilon) & t>0, x\in \Gamma_\epsilon,\\
\end{array}
\right.
\end{equation}
where $(u_D,X_D)$ are the velocity of the air and the concentration of $O_2$ at the inflow boundary $\Gamma_D$,
$e$ is the electron charge and $\mathcal{R}(X_\epsilon)$ is the Butler-Volmer reaction term \cite{Kee2012331}, which can be simplified  \cite{schmuck2013homogenization,bove2008modeling} to
\begin{equation*}
\mathcal{R}(X_\epsilon)=j_0 (X^m_\epsilon,T) \exp\Big(-\alpha_c \frac{zF}{RT}\eta\Big).
\end{equation*}
{Here,} $j_0$ corresponds to the exchange current density and depends linearly upon $X_{\epsilon}^m$, $m$ is the order of reaction,  $\eta$ is the over-potential with respect the Nerst equilibrium potential, $\alpha_c$ is the cathodic transfert coefficient, $T$ the temperature, $z$ the number of electrons involved in the electrode reaction, $R$ the universal gas constant and $F$ the Faraday constant. All the above parameters are assumed to be constants.

 We remark that the coefficient $m$ is positive and in order to prove the existence of a weak solution of \eqref{equation_eps}, \eqref{boundary_conditions_eps}, $m$ should be smaller than a certain critical value $\bar{m}$, which depends on the dimension $d$ of the space. By taking e.g. $m\leq 1$, $\mathcal{R}(X_\epsilon)\in L^2(\Gamma_\epsilon)$ and the variational formulation of \eqref{equation_eps}, \eqref{boundary_conditions_eps} will be well posed. Typical values for $m$ are e.g. $0.4$ in \cite{kenney2004mathematical}, for a composite cathode made of $60\%$ volume of LSM (lanthanum-strontium-manganite) and $40\%$ YSZ, and $0.25$ in \cite{costamagna2004electrochemical,bove2008modeling}. Thus, depending on the physical characteristics of the cathode, the reaction term $\mathcal{R}$ is in general non-linear.


%

%

\subsection{Homogenized system}

Before stating the homogenization of the system \eqref{equation_eps},\eqref{boundary_conditions_eps}, let us describe more precisely the assumptions on the porous domain $\Omega_\epsilon$ \cite{allaire2002shape}. As usual in periodic homogenization, a periodic structure is defined by a domain $\Omega$ and an associated micro-structure, or periodic cell $Y=(0,1)^d$, which is made of two complementary parts : the solid part $\omega$ and the fluid part $Y\backslash \omega$ (see Figure \ref{unit_cell}). 
We assume that $Y\backslash\omega$ is a smooth and connected open subset of $Y$, identified with the unit torus (i.e. $Y\backslash\omega$, repeated by $Y-$periodicity in $\Rbb^d$, is a smooth and connected open set of $\Rbb^d$). The domain $\Omega$ is covered by a regular mesh of size $\epsilon$: each cell $Y_i^{\epsilon}$ is of the type $(0,\epsilon)^d$, and is divided in solid part $\omega_i^{\epsilon}$ and fluid part $Y^{\epsilon}_i\backslash \omega_i^{\epsilon}$, i.e. is similar to the unit cell Y rescaled to size $\epsilon$. The fluid part 
$\Omega_{\epsilon}$ of a porous medium is defined by
\begin{equation*}
\Omega_{\epsilon}=\Omega\backslash \bigcup_{i=1}^{N(\epsilon)} \omega_i^{\epsilon},
\end{equation*}
where the number of cells is $N(\epsilon)=|\Omega|\epsilon^{-d}(1+o(1))$.\\

\begin{figure}[h]
\centering
\includegraphics[width=0.4\textwidth]{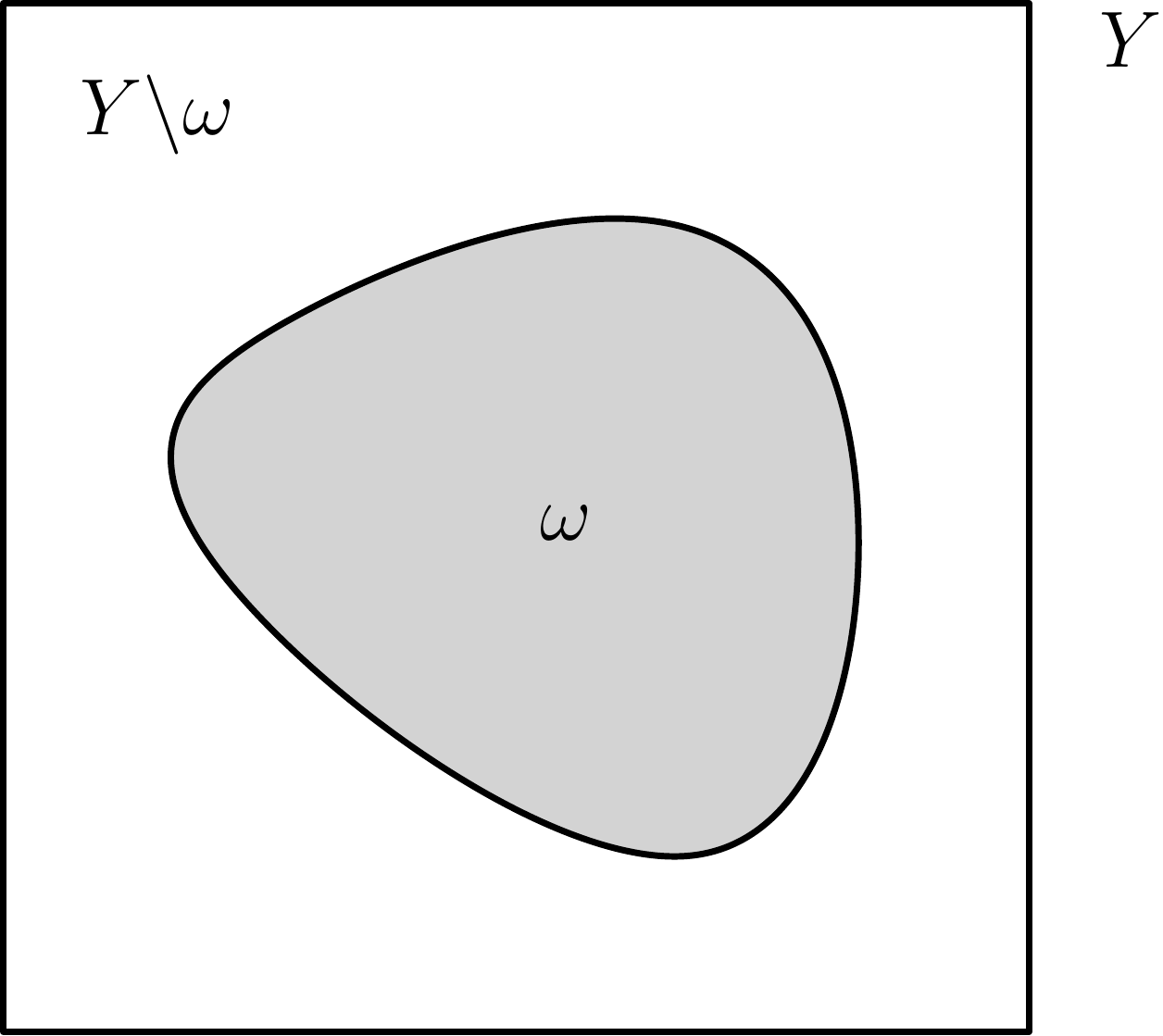}
\caption{Unit cell of a porous medium.}\label{unit_cell}
\end{figure}

Now according to \cite{hornung1997homogenization,hornung1991diffusion} for general heterogeneous catalysis and \cite{schmuck2013homogenization} for PEM fuel cells, the homogenization of the initial boundary value problem \eqref{equation_eps},\eqref{boundary_conditions_eps} corresponds to the following system 
\begin{equation}
\label{equation_homo}
\left\{
\begin{array}{ll}
\mbox{div}(u^*)=0 & x\in \Omega\\
u^*= -\frac{K}{\mu} \nabla p^* & x\in \Omega\\
\frac{\partial X^*}{\partial t}+ u^*\cdot \nabla  X^*=\lambda \mbox{div} (D \nabla X^*) +|\partial \omega| \mathcal{R}(X^*) & x\in \Omega\\
X^* =X_{\text{init}} & t=0, x\in \Omega,
\end{array}
\right.
\end{equation}

and the boundary conditions

\begin{equation}
\label{boundary_conditions_homo}
\left\{
\begin{array}{lll}
u^*=u_D & X^*=X_D & t>0,x\in \Gamma_D,\\
u^*\cdot n=0 & \nabla X^* \cdot n=0 & t>0, x\in \Gamma_N.
\end{array}
\right.
\end{equation}

The functions $u^*,p^*,X^*$ correspond to the homogenized velocity, pressure and $O_2$ concentration, respectively. 
$D$ is the effective porous media diffusion tensor, $|\partial \omega|$ is the perimeter of the micro-fuel-tube $\omega$ scaled to the unit cell and $K$ is the permeability tensor.\\ 


The definition of the tensors $K$ and $D$ stems from the so-called cell problems \cite{hornung1997homogenization}, namely 

\begin{equation}
\label{def:tensors}
K_{ij}=\int_{Y\backslash \omega}\nabla u_i(y) : \nabla u_j(y)dy, \quad D_{ik}=
\int_{Y\backslash \omega}\Big(e_i+\nabla \pi_i\Big)\cdot \Big(e_k+\nabla \pi_k\Big) dy
\end{equation}

where the tensorial product ``$\;:$'' represents the twice-contracted tensorial product, i.e. for $A,B\in \Rbb^{d\times d}$
\[
A:B=\sum \limits_{k,\ell}A_{k\ell}B_{k\ell},
\]
and the respective cell problems read
\begin{equation}
\label{cell_problem}
\left\{ \begin{array}{rcll}
&&\nabla p_i-\Delta u_i=e_i, \mbox{ }x\in Y\backslash \omega,\\
&&\mbox{div}(u_i)=0, \mbox{ }x\in Y\backslash \omega,\\
&&u_i=0,\mbox{ } x\in \partial \omega,\\
&&y\rightarrow p_i(y),u_i(y) \mbox{ Y-periodic,}
\end{array} \right.
\hspace{1cm}
\left\{ \begin{array}{rcll}
&&-\div(\nabla \pi_j+e_j)=0 , \mbox{ }x\in Y\backslash \omega,\\
&&\nabla \pi_j \cdot n=-e_j \cdot n,\mbox{ } x\in \partial \omega,\\
&&y\rightarrow \pi_j(y) \mbox{ Y-periodic,}
\end{array} \right.
\end{equation}

with $(e_i)_{1\leq i\leq d}$ being the canonical basis of $\mathbb{R}^d$.\\

\subsection{Optimization problem}
  
The topology optimization problem {consists} in finding the optimal lay-out $\omega^*\subset Y$, representing the scaled shape of the fuel tubes inside the unit-cell, which solves
  
\begin{equation}
\label{optim_problem}
\left\{ \begin{array}{rcll}
&&\max\limits_{\omega\subset Y} |\partial \omega|\\
&&\mbox{s.t.}\\
 &&|\omega| \geq C_f|\partial \omega|\,\\ 
&& \frac{tr(K)}{d}\geq k_{min},\\ 
\end{array} \right.
\end{equation}

where the tensor $K$ is defined according to \eqref{def:tensors} and $tr()$ is the trace operator. The first constraint represents a lower bound $C_f\geq 0$ of the fuel tube hydraulic diameter, which prevents a drastic pressure drop and a ``too oscillating'' boundary. The constant $C_f$ must satisfy $C_f\leq \sqrt{1/4\pi}$, according to the  isoperimetric inequality. The second constraint involves a lower bound $k_{min}$ of the trace of $K$ (which is intended to give a {measure} of the permeability \cite{guest2007design}) avoiding a high pressure drop in the cathode. In other words, we try to find the shape $\omega$ with the largest perimeter (so it maximizes the factor on the electro-chemical reaction term $\mathcal{R}$ of the homogenized transport equation in \eqref{equation_homo}) such that two constraints of pressure drop inside and outside the tubes are fulfilled.\\

Let $\omega_0\subset Y$ be a regular fixed open set. Then the following result yields the existence of an optimal solution of \eqref{optim_problem}
\begin{prop}
Denote as $\Uadmis$ the collection of open subsets of $Y$ close to $\omega_0$ in the sense of the pseudo-distance
\begin{equation*}
d^{\mathbb{D}}(\omega_0,\omega)= \inf \limits_{T\in \mathbb{D}| T(\omega_0)=\omega}\Big(\left\|T-Id\right\|_{W^{2,\infty}}+ \left\|T^{-1}-Id\right\|_{W^{2,\infty}}\Big),
\end{equation*}
where $\mathbb{D}$ is the set of diffeomorphism 
\begin{equation*}
\mathbb{D}= \left\{T \mbox{ such that }(T-Id)\in W^{2,\infty}(\Rbb^d;\Rbb^d),(T^{-1}-Id)\in W^{2,\infty}(\Rbb^d;\Rbb^d)  \right\}.
\end{equation*}
Furthermore, chose in particular the family of applications $T=Id+\theta$, where $\theta \in W^{2,\infty}(\Rbb^d;\Rbb^d)$ and $\left\|\theta \right\|_{W^{2,\infty}}<1$. Then problem \eqref{optim_problem} admits at least one optimal solution in $\Uadmis$.

\end{prop}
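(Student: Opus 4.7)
The plan is to use the direct method of the calculus of variations on the parametrization $\theta$. First, let $(\omega_n)\subset\Uadmis$ be a maximizing sequence for \eqref{optim_problem}, so that $\omega_n=(Id+\theta_n)(\omega_0)$ with $\theta_n\in W^{2,\infty}(\Rbb^d;\Rbb^d)$ and $\|\theta_n\|_{W^{2,\infty}}<1$. Since the sequence $(\theta_n)$ is bounded in $W^{2,\infty}(\Rbb^d;\Rbb^d)$, Banach--Alaoglu (applied to the dual structure of $W^{2,\infty}$) yields, up to a subsequence, a weak-$\ast$ limit $\theta^*\in W^{2,\infty}$ with $\|\theta^*\|_{W^{2,\infty}}\leq 1$. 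By the Rellich--Kondrachov compact embedding, one also has strong convergence $\theta_n\to\theta^*$ in $C^1_{\text{loc}}(\Rbb^d;\Rbb^d)$, so in particular the Jacobians $\nabla(Id+\theta_n)$ converge uniformly on $\omega_0$ and its neighborhood. A minor technical point is that $\|\cdot\|_{W^{2,\infty}}<1$ is an open condition; this is handled either by restricting the admissible class to a closed ball $\{\|\theta\|_{W^{2,\infty}}\leq\alpha\}$ for some $\alpha<1$ (which still ensures $Id+\theta$ is a $W^{2,\infty}$-diffeomorphism by the Banach fixed point theorem) or by checking that the supremum is attained in the interior.

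Next, I would establish continuity, with respect to this topology, of the three geometric/physical quantities appearing in \eqref{optim_problem}: the volume $|\omega|$, the perimeter $|\partial\omega|$, and the permeability trace $\text{tr}(K)$. For $|\omega|=\int_{\omega_0}\det(\nabla(Id+\theta))\,dy$ and $|\partial\omega|=\int_{\partial\omega_0}|\text{cof}(\nabla(Id+\theta))\,n|\,dS$, the uniform $C^1$-convergence of $Id+\theta_n$ gives uniform convergence of the integrands, hence continuity (the second formula is the reason the $W^{2,\infty}$ regularity, and not merely $W^{1,\infty}$, is needed: one needs traces of derivatives on $\partial\omega_0$). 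The constraints being therefore continuous (and the first one an inequality $\geq$) will be preserved in the limit.

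The main obstacle, as usual in these shape existence results, is the continuity of the permeability tensor $K(\omega)=\int_{Y\setminus\omega}\nabla u_i:\nabla u_j\,dy$, since the Stokes cell problem \eqref{cell_problem} is posed on the moving domain $Y\setminus\omega_n$. The standard remedy is to pull back to the fixed reference domain $Y\setminus\omega_0$ through the change of variables $y=(Id+\theta_n)(\bar y)$: the transported velocities $\bar u_i^n=u_i^n\circ(Id+\theta_n)$ satisfy a Stokes-like problem on $Y\setminus\omega_0$ with coefficients that depend continuously on $\nabla\theta_n$. Standard a priori estimates in $H^1(Y\setminus\omega_0)^d$, combined with the strong $C^1$-convergence of the transported coefficients, yield strong $H^1$-convergence of $\bar u_i^n$ to $\bar u_i^*$, which is the cell solution associated with $\omega^*=(Id+\theta^*)(\omega_0)$. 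Passing to the limit in the pulled-back integral defining $K$ then gives $K(\omega_n)\to K(\omega^*)$, so the permeability constraint is preserved.

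Combining the three continuities, $\omega^*\in\Uadmis$ is admissible and $|\partial\omega^*|=\lim|\partial\omega_n|=\sup_{\Uadmis}|\partial\omega|$, so $\omega^*$ solves \eqref{optim_problem}. I expect the transport of the Stokes cell problem and the resulting continuity of $K$ to be the only nontrivial step; the remainder is a fairly direct application of the Hadamard framework in $W^{2,\infty}$.
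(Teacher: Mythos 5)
Your proof is correct, and at the structural level it is the same argument as the paper's: both run the direct method over the Murat--Simon pseudo-distance topology, extracting from a maximizing sequence $\theta_n$ (bounded in $W^{2,\infty}$) a subsequence converging strongly in $W^{1,\infty}_{\mathrm{loc}}$, and then passing to the limit in the three shape functionals. (Your invocation of Rellich--Kondrachov is really Arzel\`a--Ascoli in this $L^\infty$-scale setting, but the paper's own appeal to ``the Sobolev embedding theorem'' is equally loose, and the compactness claim is true.) Where you genuinely diverge is the continuity step, which is the heart of the matter: the paper disposes of it in one sentence by asserting that $|\omega|$, $|\partial\omega|$ and $K$ are continuous in $\|\theta\|_{W^{1,\infty}}$ ``thanks to their respective shape differentiability,'' deferring to the shape-derivative material of Section 3 and to \cite{schmidt2010shape}, whereas you prove continuity directly: the $\det(\nabla(Id+\theta))$ and $\mathrm{cof}(\nabla(Id+\theta))$ change-of-variables formulas for volume and perimeter, and the pullback of the Stokes cell problem to the fixed domain $Y\setminus\omega_0$ with uniform $H^1$ a priori estimates for $K$. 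Your route is more self-contained and arguably cleaner, since differentiability of $K$ along $W^{2,\infty}$ deformations is itself established by precisely the transport argument you sketch, so the paper's shortcut borrows from downstream results; it also correctly identifies why $W^{2,\infty}$ (giving $C^1$ regularity of $T$, hence well-defined traces of $\nabla T$ on $\partial\omega_0$) and not merely $W^{1,\infty}$ is required. Moreover you repair a rigor gap the paper leaves open: the constraint $\|\theta\|_{W^{2,\infty}}<1$ defines a non-closed set, so the weak-$\ast$ limit may land on the boundary of the unit ball where $Id+\theta^*$ need not be a diffeomorphism; your restriction to a closed ball of radius $\alpha<1$ is exactly the fix needed for the compactness assertion to hold as stated. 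The one point you leave implicit (as does the paper) is that the pullback of the \emph{periodic} cell problem requires the deformation to respect the periodic structure, e.g.\ $\theta$ vanishing near $\partial Y$ or $Y$-periodic; this is harmless here since the tube $\omega$ can be kept strictly inside the cell, but it deserves a sentence.
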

\begin{proof}
It is a classical result that the set $\Uadmis$ is compact w.r.t. the above topology, introduced by F.Murat and J.Simon \cite{murat1974quelques}. Furthermore since $\left\|\theta \right\|_{W^{2,\infty}}<1$, any minimizing sequence of \eqref{optim_problem} has a convergent subsequence in $W^{1,\infty}(\Rbb^d;\Rbb^d)$, due to the Sobolev embedding theorem. Finally the existence of a solution of \eqref{optim_problem} stems from the continuity of the perimeter, the area and the permeability tensor $K$ w.r.t. $\left\|\theta\right\|_{W^{1,\infty}}$, thanks to their respective shape differentiability. The concept of the shape derivative will be recalled in the next section.\\ 
\end{proof}

\section{Shape sensitivity analysis}
\label{shape_deriv_fuel_cell}

In this section we briefly recall the definition of the shape derivative and  
the gradients of the functionals involved in (\ref{optim_problem}) (the proofs can be found in \cite{schmidt2010shape}).

Shape differentiation is a classical topic that goes back to Hadamard \cite{allaire2006conception,pierre,intro_sokolowski}. 
Let the overall domain $\Omega\subset \mathbb{R}^d$ be fixed and bounded. 
Let $\omega \subset \Omega$ be a smooth open subset which is variable. Indeed, 
we consider variations of the type
$$
\big( Id + \theta \big) (\omega) := \left\{ x + \theta(x) 
\mbox{ for } \ x \in \omega\right\} ,
$$
with $\theta\in W^{1,\infty}(\Omega;\RR^d)$ such that
$\left\|\theta\right\|_{W^{1,\infty}(\Omega;\RR^2)}<1$ and tangential on $\partial \Omega$ 
(i.e., $\theta\cdot n=0$ on $\partial \Omega$ ; this last
condition ensures that $\Omega=( Id + \theta)\Omega$).
It is well known that, for sufficiently small $\theta$,
$(Id + \theta)$ is a diffeomorphism in $\Omega$.

\begin{defn}
\label{defshape}
The shape derivative of a function $J(\omega)$ is
defined as the Fr\'echet derivative in $W^{1,\infty}(\Omega;\RR^d)$
at 0 of the application $\theta \to J\big( \big( Id + \theta \big) \omega \big)$,
i.e.
$$
J\big( \big( Id + \theta \big) \omega \big)  = J (\omega)
+ J^\prime(\omega)(\theta) + o(\theta) \quad\mbox{with}\quad
\lim_{\theta\to 0} \frac{|o(\theta)|}{\quad \Vert\theta\Vert_{W^{1,\infty}}}=0\;,
$$
where $J^\prime(\omega)$ is a continuous linear form on $W^{1,\infty}(\Omega;\RR^d)$.\\
\end{defn}

Now we apply the above definition in the context of our cell problem \eqref{cell_problem}.

{
\lem
Let $\omega$ be a smooth subset of $Y$. Define
\begin{equation*}
J_{surf}(\omega)=|\partial \omega|=\int_{\partial \omega} ds \;\;\;\mbox{   and  }\;\;\; J_{vol}(\omega)=|\omega|=\int_{\omega} dx.
\end{equation*}

Then the respective shape gradients at $0$ in the direction $\theta$ read
\begin{eqnarray*}
J'_{surf}(\omega)(\theta)=\int_{\partial \omega} \theta \cdot n H\,ds,\quad 
J'_{vol}(\omega)(\theta)=\int_{\partial \omega} \theta \cdot n\, ds,
\end{eqnarray*}
where $H$ is the mean curvature of $\partial \omega$ defined by $H=\mbox{div}(n)$.\\
}

{
\prop Define the spaces
\begin{eqnarray*}
&&H_{0,\#}^1(Y\backslash \omega)^d= \left\{ v\in H^1(Y\backslash \omega)^d: v|_{\partial \omega}=0, y\rightarrow v(y) \mbox{ is $Y-$periodic}\right\},\\
&&L_{0,\#}^2(Y\backslash \omega)= \left\{ q\in L^2(Y\backslash \omega): \int_{Y\backslash \omega} q dx=0,\; y\rightarrow q(y) \mbox{ is $Y-$periodic}\right\}.
\end{eqnarray*}
Let $u_i\in H_{0,\#}^1(Y\backslash \omega)^d$ and $p_i \in L^2_{0,\#}(Y\backslash \omega)$, $i=1,...,d$ be the collection of solutions of the variational formulation of the first cell problem in (\ref{cell_problem})
\begin{equation}
\label{variational_stokes}
\int_{Y\backslash \omega} \Big(\nabla u_i : \nabla v_i - \text{div}(v_i)p_i - \mbox{div}(u_i) q_i \Big)dx = \int_{Y\backslash \omega} e_i \cdot v_i dx, \;\;\forall (v_i,q_i)\in (H_{0,\#}^1(Y\backslash \omega)^d,L^2_{0,\#}(Y\backslash \omega)).
\end{equation}
Define the cost function (which does not depend on the pressure)
\begin{equation*}
J(\omega)=\int_{Y\backslash \omega} j(x,u,\nabla u)dx, \text{ where }u=(u_i)_{i=1...d},
\end{equation*}
and the collection of adjoint states $(U_i,P_i)\in (H_{0,\#}^1(Y\backslash \omega)^d,L^2_{0,\#}(Y\backslash \omega))$, such that  
\begin{equation}
\label{variational_adjoint}
\int_{Y\backslash \omega} \Big(\nabla U_i : \nabla v_i -\text{div}(v_i)P_i - \mbox{div}(U_i) q_i \Big)dx = -\int_{Y\backslash \omega} \Big( \frac{\partial j}{\partial u_i}(x,u_i,\nabla u_i) \cdot v_i+ \frac{\partial j}{\partial \nabla u_i}(x,u_i,\nabla u_i):\nabla v_i\Big) dx,
\end{equation}
{for all $(v_i,q_i)\in (H_{0,\#}^1(Y\backslash \omega)^2,L^2_{0,\#}(Y\backslash \omega))$.} Then $J(\omega)$ is shape differentiable at $0$ in the direction $\theta$
and the shape derivative reads 
\begin{equation*}
J'(\omega)(\theta)=\int_{\partial \omega} \Big\{j(x,u,\nabla u)-\sum_{i=1}^d \Big(\frac{\partial U_i}{\partial n}\cdot \frac{\partial u_i}{\partial n}+\frac{\partial j}{\partial \nabla u_i}\cdot n \cdot \frac{\partial u_i}{\partial n} \Big)\Big\}\theta \cdot n  ds.
\end{equation*}

}
{
\cor
According to \eqref{def:tensors}
\begin{equation*}
\frac{\mbox{tr(K)}}{d}=\frac{1}{d}\sum_{i=1}^d K_{ii}=\frac{1}{d}\int_{Y\backslash \omega}\sum_{i=1}^d |\nabla u_i|^2dx,
\end{equation*}
so the shape derivative reads
\begin{equation*}
\frac{1}{d}\text{tr}(K)'(\omega)(\theta)=-\frac{1}{d}\sum_{i=1}^d\int_{\partial \omega}  \Big(\big(\frac{\partial u_i}{\partial n}\big)^2+\frac{\partial  u_i}{\partial n} \cdot \frac{\partial U_i}{\partial n}\Big)\theta \cdot nds,
\end{equation*}
 

with $U_i$ solution of the adjoint problem 
\begin{equation*}
\int_{Y\backslash \omega}\Big(\nabla U_i : \nabla v_i -\text{div}(v_i)P_i - \mbox{div}(U_i) q_i\Big) dx =-2 \int_{Y\backslash \omega} \nabla u_i: \nabla v_i,
\forall (v_i,q_i)\in (H_{0,\#}^1(Y\backslash \omega)^2,L^2_{0,\#}(Y\backslash \omega))
\end{equation*}
}

\section{Numerical results}
\label{numerical_results_fuel_cell}

Now we detail the numerical solution of problem (\ref{optim_problem}) {via the} level-set method of Osher and Sethian \cite{osher.sethian} for topology optimization. At each iteration of the algorithm, the shape of the tube $\omega$ in the unit cell $Y$ is parametrized via a level-set function
$$
\left\{ \begin{array}{ll}
\psi(x) = 0 & \mbox{ for } x\in\partial \omega,  \\
\psi(x) < 0 &  \mbox{ for } x\in\omega, \\
\psi(x) > 0 & \mbox{ for } x\in Y\backslash\omega,
\end{array}
\right.
$$
defined on a fixed mesh in an Eulerian framework. 

During the optimization process, the shape $\omega(t)$ is going to evolve according to a fictitious time $t\in \RR^+$, which corresponds to the descent stepping. As it is well known, the evolution of the level set function $\psi$ is governed by a Hamilton-Jacobi equation 
\begin{equation}
\label{HJ_eq}
\frac{\partial \psi}{\partial t}+\mathcal{V} |\nabla \psi|=0,
\end{equation}
where $\mathcal{V}$ is the normal velocity. Equation (\ref{HJ_eq}) is posed in the whole domain $Y$, and not only on the interface $\partial \omega$. 
The main point in using a level set method is that it replaces the Lagrangian evolution of the boundary $\partial \omega$ by the Eulerian solution of a transport equation in the whole fixed
domain $Y$, easily allowing topology changes. In practice the Hamilton-Jacobi equation (\ref{HJ_eq}) is solved by an explicit second order upwind scheme (see e.g. \cite{sethian.99}) on a fixed Cartesian grid. The boundary conditions for $\psi$ are of Neumann type. Since this scheme is explicit in time, its time stepping must satisfy a CFL condition. Moreover, in order to regularize the
level set function (which may become too flat or too steep), we reinitialize it periodically by solving another Hamilton-Jacobi equation which admits as a stationary solution the signed distance to the initial interface \cite{sethian.99}.

\medskip

The choice of the normal velocity $\mathcal{V}$ in (\ref{HJ_eq}) 
is based on the shape derivatives computed in the previous section accordingly to  \cite{allaire2004structural}.
Generally speaking, the shape derivative of a functional $J(\omega)$ in 
the direction $\theta\in W^{1,\infty}(Y;\RR^d)$ reads
\begin{equation}
\label{general_shape_derivative}
J'(\omega)(\theta)=\int_{\partial \omega}\mathcal{T}\theta \cdot n ds,
\end{equation}
where the integrand $\mathcal{T}(x)$ depends on the solution of the cell problem and possibly of some adjoint equation. 
Since only the normal component of $\theta$ plays a role in (\ref{general_shape_derivative}), 
a descent direction for $J$ is just a vector field $\theta$ 
satisfying
\begin{equation}
\theta = \mathcal{V} n \quad \mbox{ and } \quad 
J'(\omega)(\theta)=\int_{\partial \omega}\mathcal{T} \mathcal{V} ds\leq 0.
\end{equation}
To ensure the decrease of $J$, the simplest choice is $\mathcal{V}=-\mathcal{T}$. 
However, $\mathcal{T}$ is a priori defined only on the interfaces $\partial \omega$ 
while $\mathcal{V}$ must be defined in the entire domain $Y$. Therefore, the 
choice $\mathcal{V}=-\mathcal{T}$ is implicitly depending on some extension 
process of $\mathcal{T}$. If such an extension is not obvious or if 
we want to regularize the velocity fields, there is an alternate choice 
based on a different underlying scalar product (see e.g. 
\cite{allaire2004structural}, \cite{MR2225309}). \\

Since the interface $\partial \omega$ does not necessarily match with the mesh, in order to simulate an impenetrable interface in \eqref{cell_problem}, an artificial permeability (or porosity) term is added on the {left hand sides} of equations \eqref{variational_stokes} and \eqref{variational_adjoint}, respectively 
\begin{equation}
\label{artificial_momentum}
\int_{Y}\frac{5}{2\rho^2}(u_i \cdot v_i)dx,\quad \int_{Y}\frac{5}{2\rho^2}(U_i \cdot v_i)dx,
\end{equation}
where $2\rho^2/5$ represents the permeability of the medium \cite{brinkman1947calculation,durlofsky1987analysis}. For topology optimization purposes, $\rho$ is a regular approximation of $\mathcal{H}(\psi)$, where $\mathcal{H}$ is the Heaviside function and $\min \limits_{x\in Y} \rho (x)=\delta>0$ \cite{bendsoe1995methods,borvall_petersson_2002}. Typically this value is set up to $\delta=0.001$. We remark that thanks to the introduction of  \eqref{artificial_momentum}, all integrals are thus calculated in $Y$.

\begin{figure}[h]
\centering
\includegraphics[width=0.32\textwidth,height=0.32\textwidth]{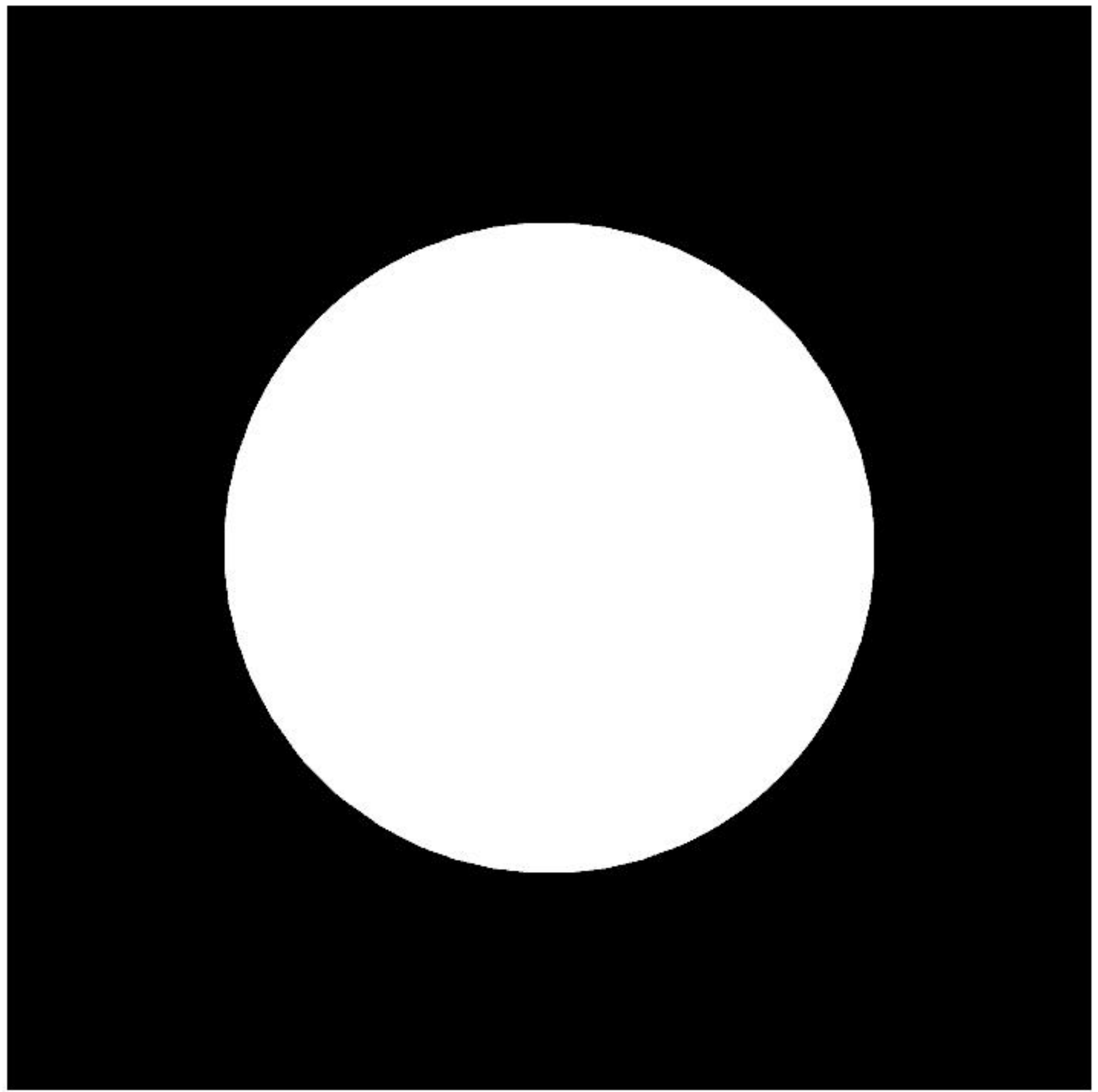}
\includegraphics[width=0.32\textwidth,height=0.32\textwidth]{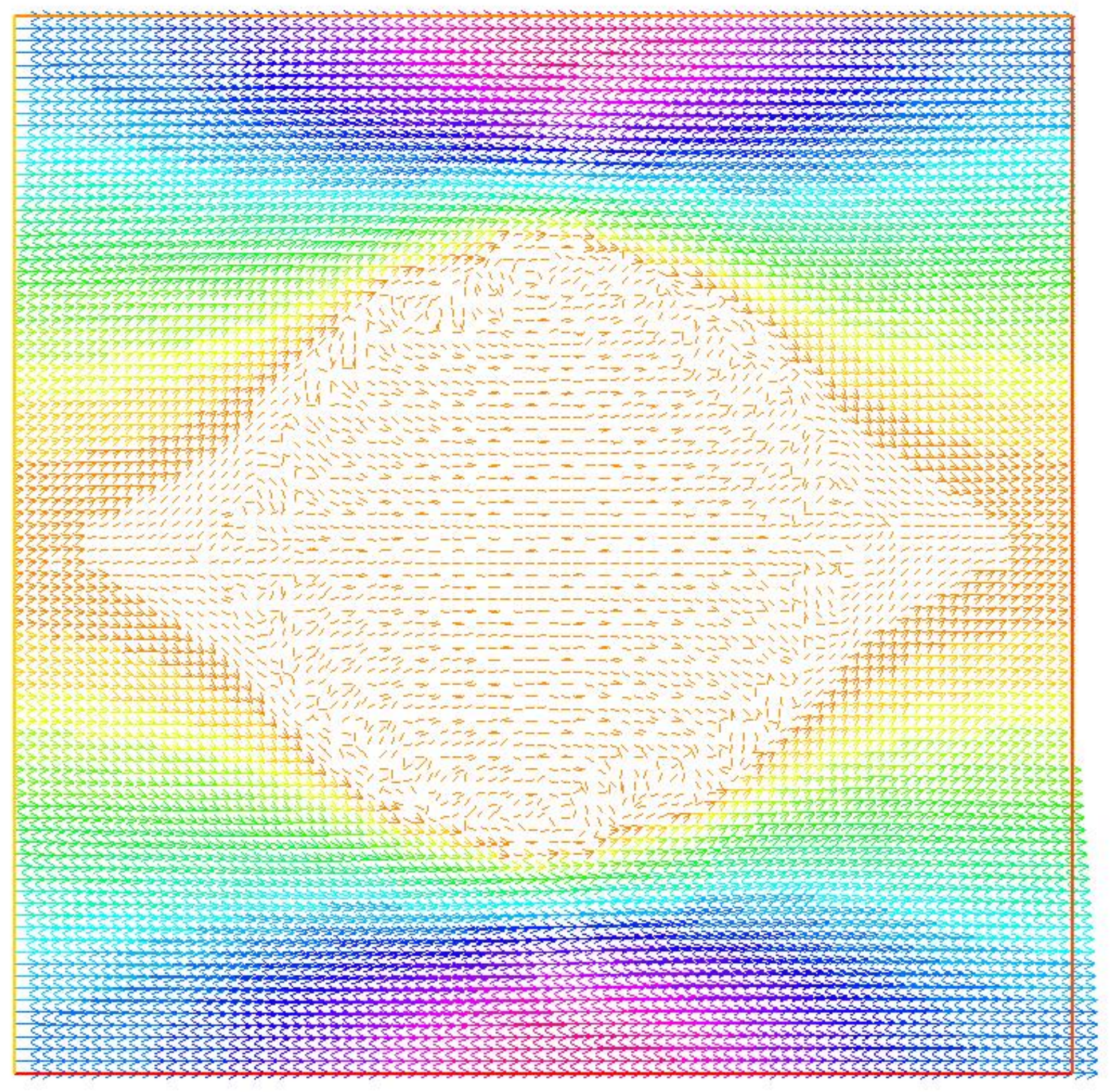}
\includegraphics[width=0.32\textwidth,height=0.32\textwidth]{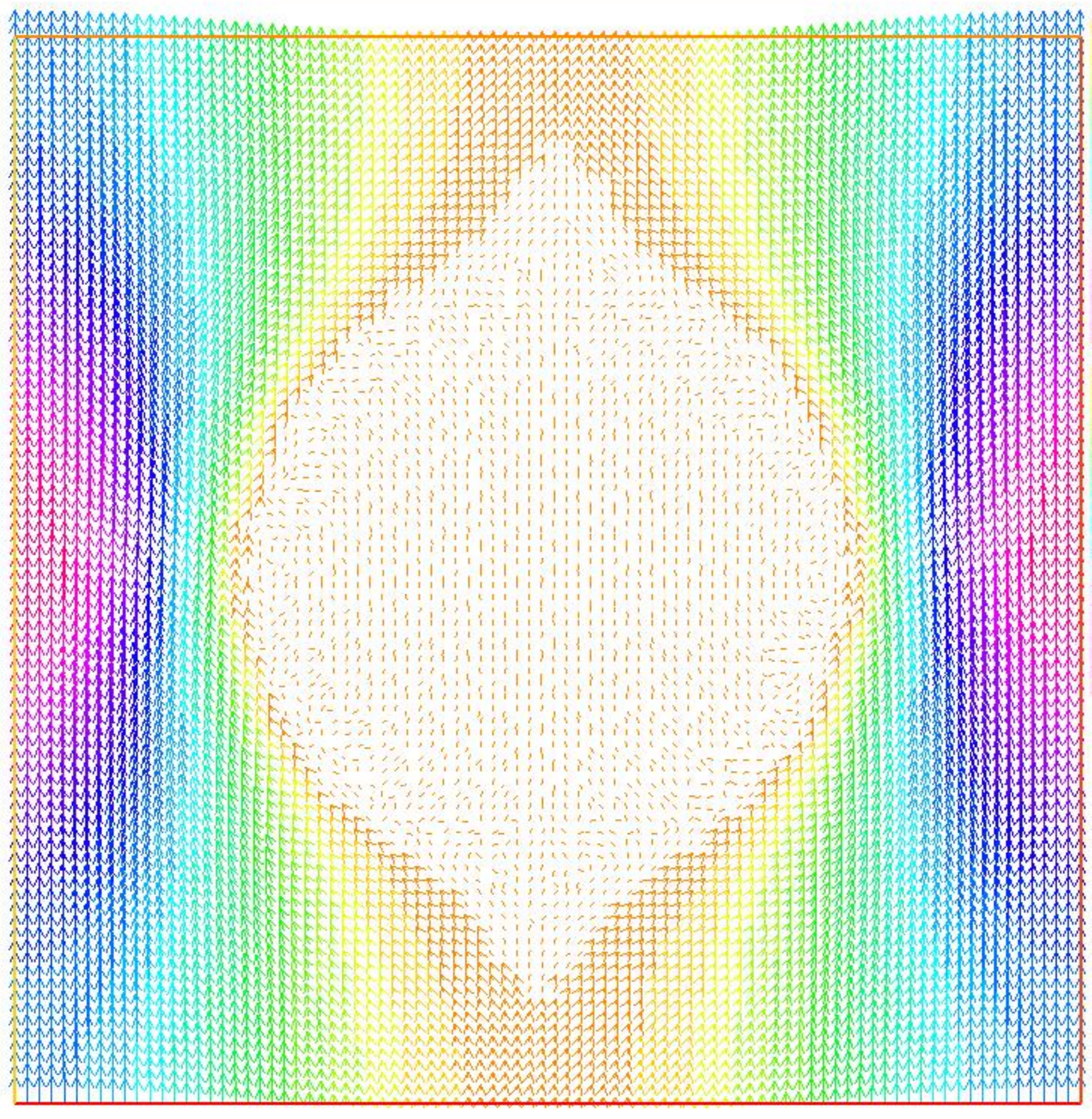}
\caption{Example of the 2D stokes solutions $u_1,u_2$ (left and right respectively) of the cell problem for a circular domain in \eqref{variational_stokes}. The black zone represents the air phase (transporting the oxygen) meanwhile the white one the fuel phase. A solid interface lays between them. We remark that inside the fuel phase the velocity of the air is almost zero, due to \eqref{artificial_momentum}.}
\end{figure}

For the optimization algorithm we use an augmented Lagrangian method $$
\mathcal{L}(\omega,\ell,\mu) = |\partial \omega|-\ell_{1}(C_f|\partial \omega|-|\omega|)-\ell_2 \left (k_{min}-\frac{\text{tr}(K)}{d} \right)+
\frac{\mu_1}{2}(C_f|\partial \omega|-|\omega|)^2+\frac{\mu_2}{2} \left(k_{min}-\frac{\text{tr}(K)}{d} \right)^2,
$$  
where $\ell=(\ell_i)_{i=1,2}$ and $\mu=(\mu_i)_{i=1,2}$ are lagrange multipliers and penalty parameters for the constraints. The Lagrange multipliers are updated at each iteration $n$ according to the optimality condition
\[
\ell_{1}^{n+1}=\ell_{1}^{n}-\mu_{1}(C_f|\partial \omega_n|-|\omega_n|)
\quad \mbox{and} \quad \ell_{2}^{n+1}=\ell_{2}^{n}-\mu_{2}\left(k_{min}-\frac{\text{tr}(K)}{d}\right).
\]
 The penalty parameters are augmented every $5$ iterations. With such an algorithm the constraints are enforced only at convergence.\\ 
 
Fig. \ref{homogenized} shows different optimal periodic layouts starting from an intuitive circular arrangement. The radius of the circle in $Y$ was set up to $r=0.3$, meanwhile the coefficients $C_f=0.15$ and $k_{min}=0.011$ were calculated so as both constraints in \eqref{optim_problem} were active within this layout. The first optimal design (second row Fig. \ref{homogenized}) corresponds to the optimal solution of the algorithm for the foregoing parameters. Then the second design (third row Fig. \ref{homogenized}) was established by reducing $C_f$ and $k_{min}$ to the half. The perimeter gain w.r.t. the circular layout is $7\%$ for the former design and $60\%$ for the latter one.\\

%

\begin{figure}[h]
\centering
\begin{tikzpicture}[      
        every node/.style={anchor=south west,inner sep=0pt},
        x=1mm, y=1mm,
      ]   
     \node (fig5) at (0cm,0cm){\includegraphics[width=0.25\textwidth,height=0.24\textwidth]{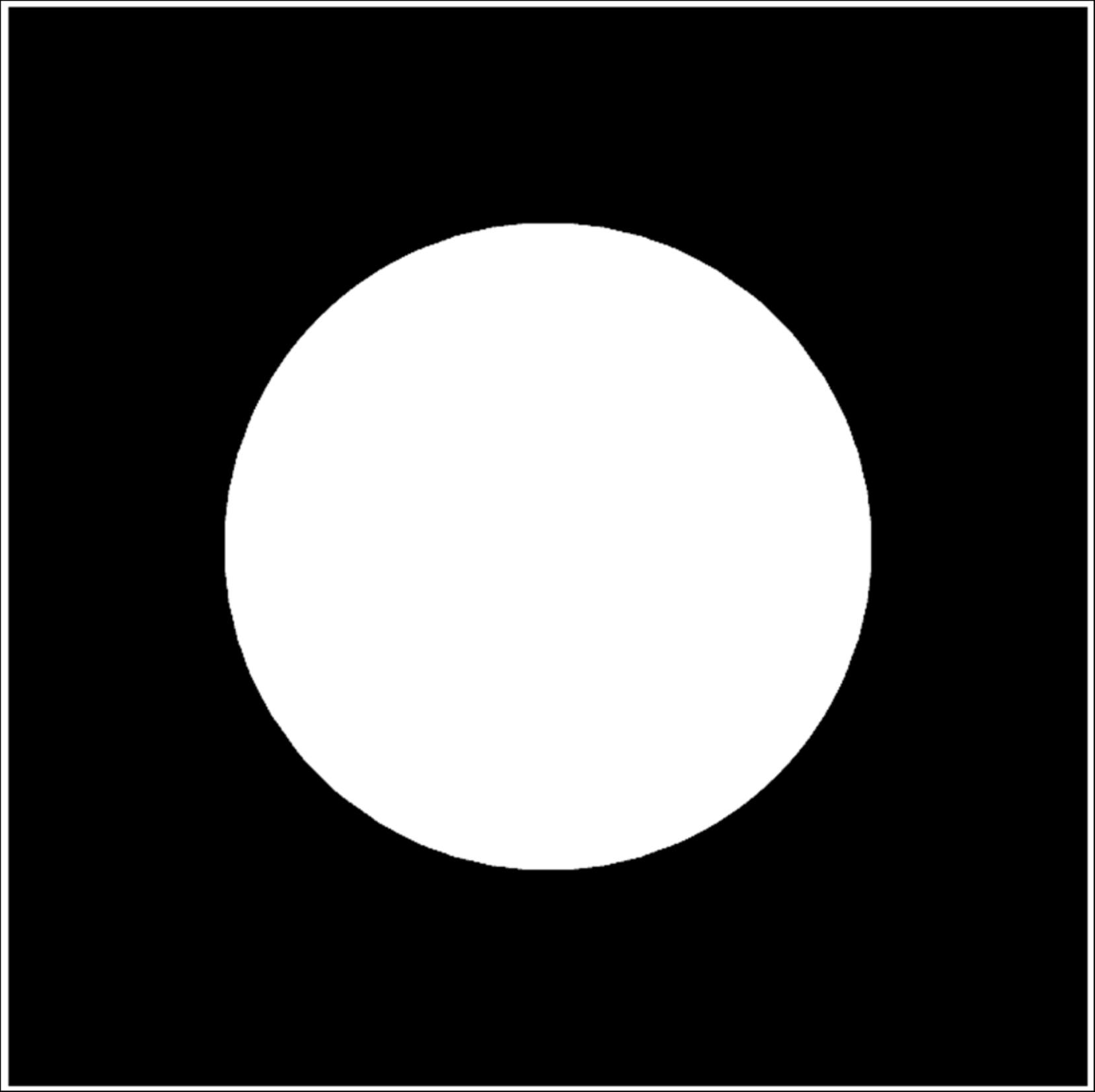}};
     \node (fig6) at (4.5cm,0cm)
     {\includegraphics[width=0.25\textwidth,height=0.24\textwidth]{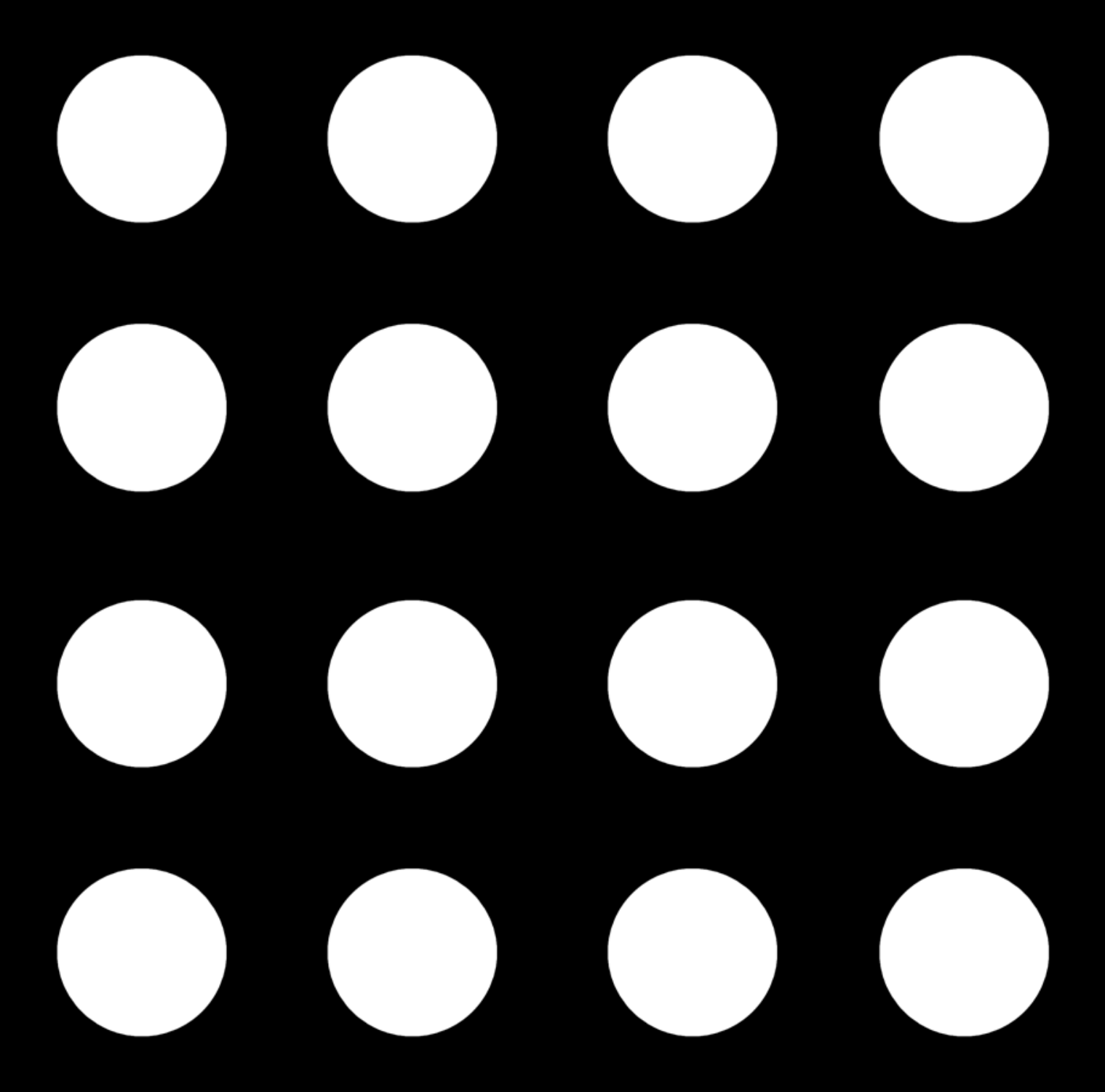}}; 
     \node (fig1) at (0cm,-4.5cm){\includegraphics[width=0.25\textwidth,height=0.24\textwidth]{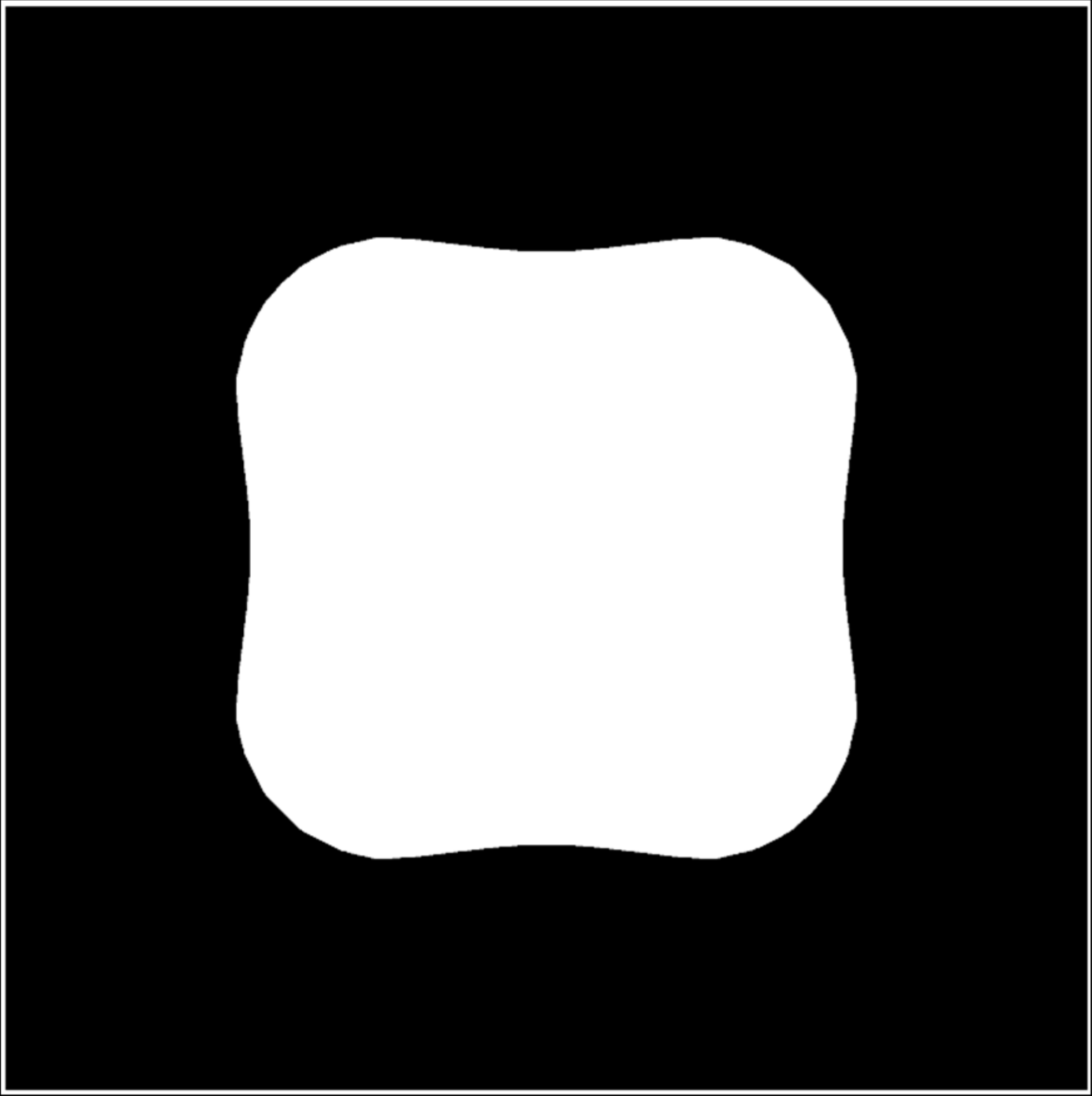}};
     \node (fig2) at (4.5cm,-4.5cm)
     {\includegraphics[width=0.25\textwidth,height=0.24\textwidth]{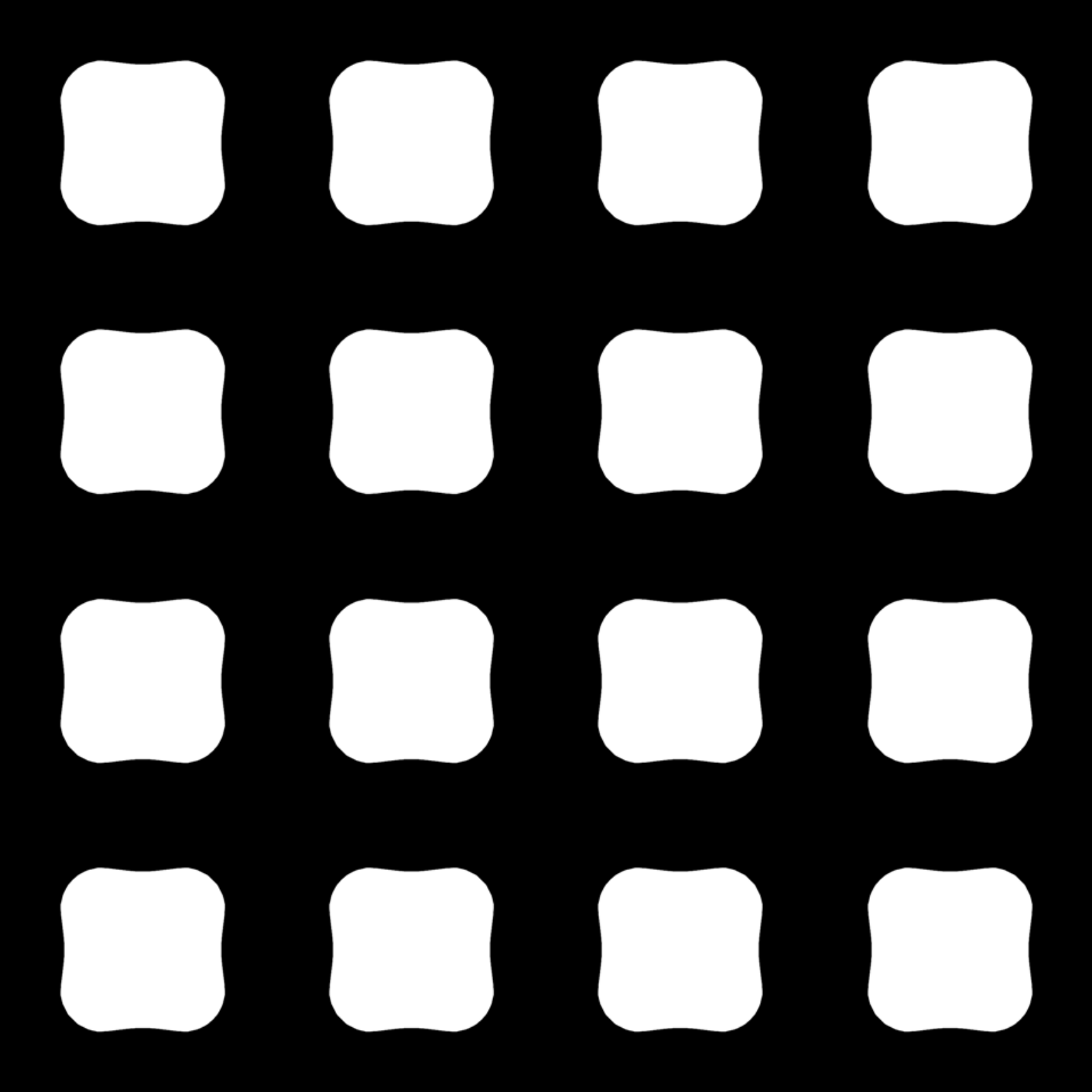}};
     \node (fig3) at (0cm,-9cm){\includegraphics[width=0.25\textwidth,height=0.24\textwidth]{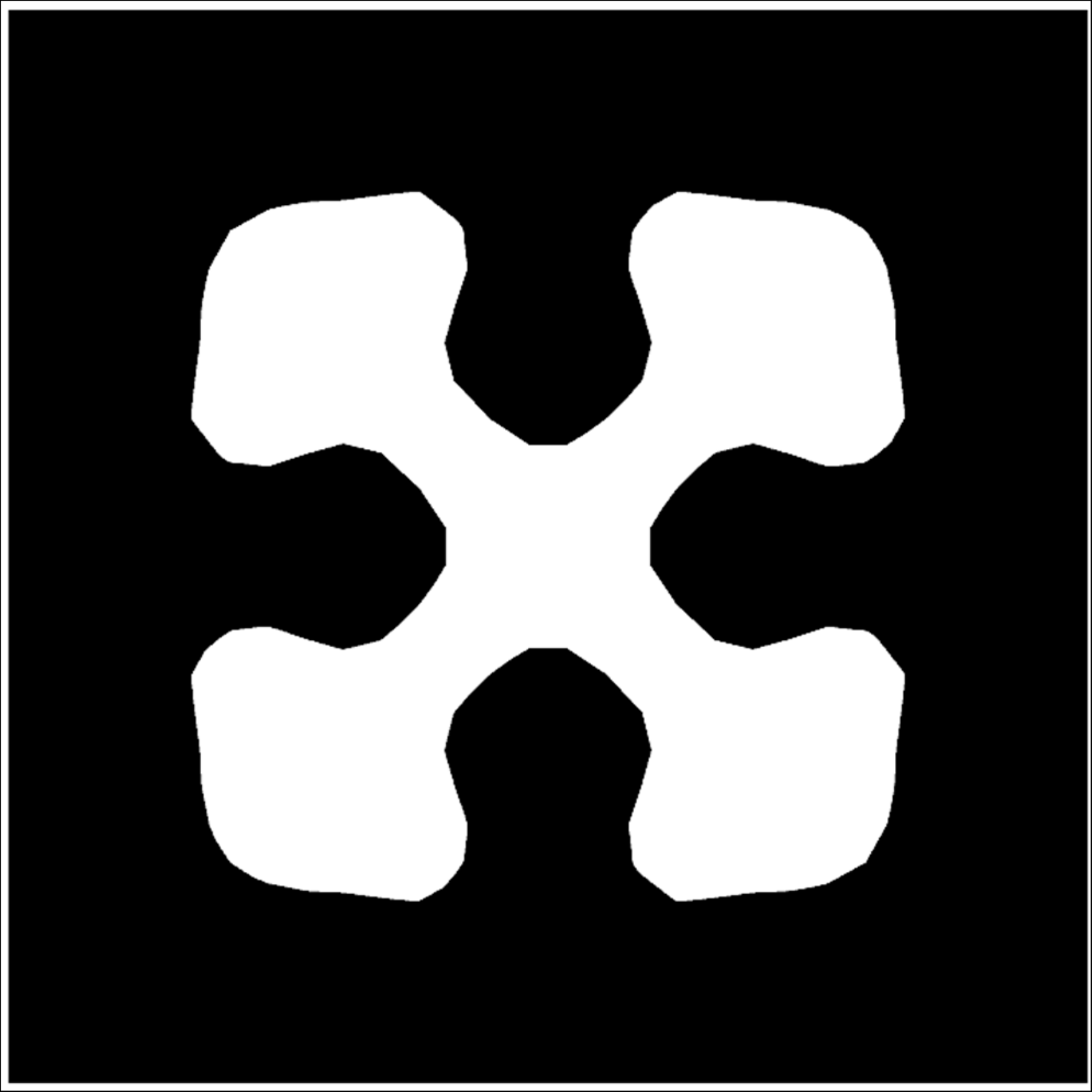}};
     \node (fig4) at (4.5cm,-9cm)
     {\includegraphics[width=0.25\textwidth,height=0.24\textwidth]{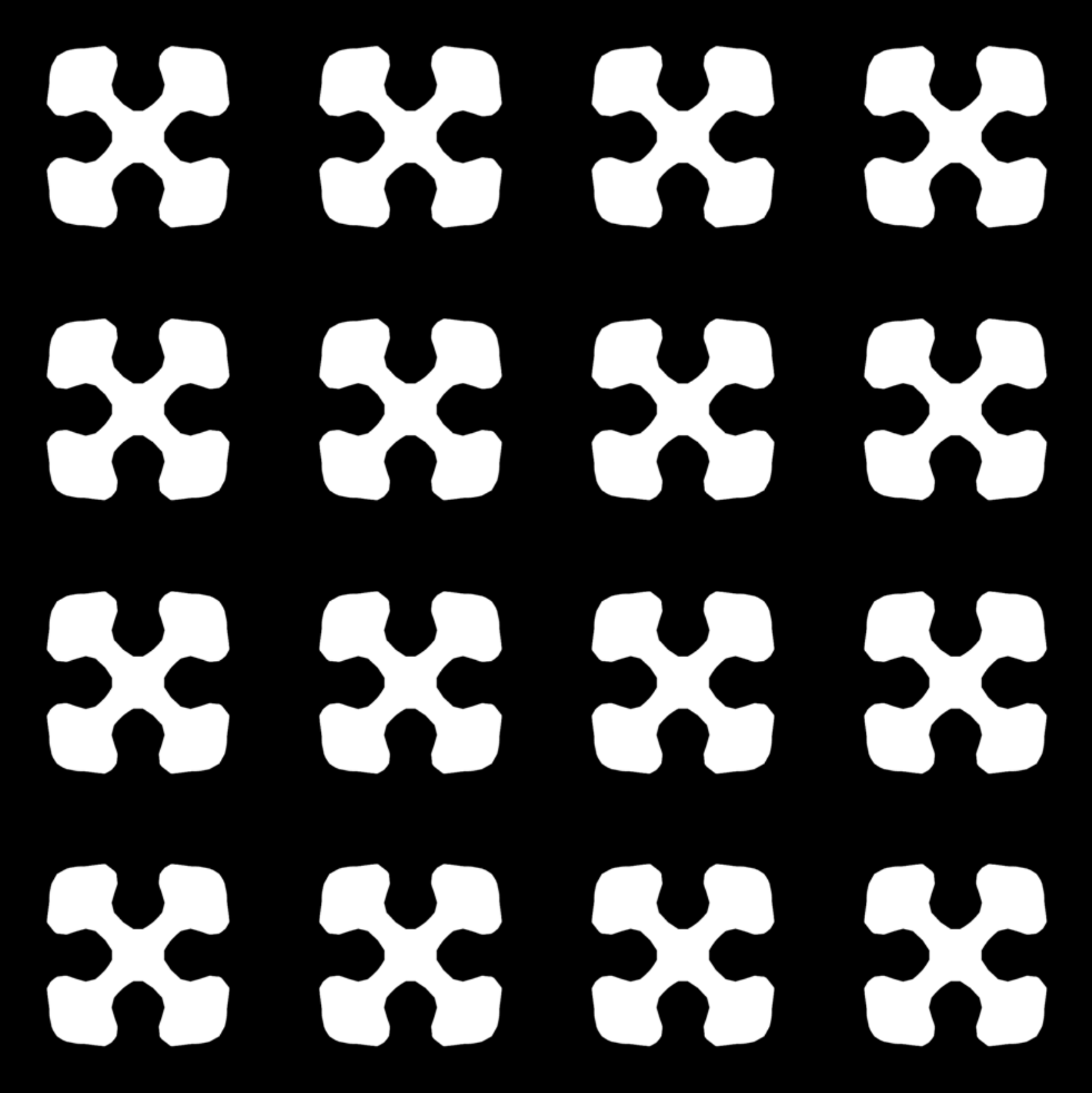}};
\end{tikzpicture}
\caption{Two micro-tubular optimal designs for different values of $k_{min}$ and $C_f$ starting from a circular layout. The base cell results are on the left and the corresponding periodic structures on the right.}\label{homogenized}
\end{figure}

{\section{Summary and Outlook}}

The above results demonstrate the capabilities of topology optimization via a level-set method to enhance the design of micro-tubular fuel cells for the aeronautic industry. Thanks the application of an inverse homogenization technique and the level-set method, periodic optimal micro-tubular fuel cells with a sharp contour can easily be designed and then manufactured by 3D printing. The foregoing study thus suggests a promising use of these technologies in the future  computer aided design of fuel cells.\\

\textbf{Acknowledgments:} {The author would like to thank Airbus Group for its financial support in the framework of his PhD thesis, Ch. Nespoulous and E. Moullet of Airbus Innovations for their important contribution through the joint work developed during E. Mullet's internship \cite{etienne}, and also G. Allaire of Centre de Math\'ematiques Appliqu\'ees of \'Ecole Polytechnique for many helpful suggestions after kindly reading this manuscript.}

\newpage
\bibliographystyle{abbrv}
\bibliography{FC_biblio,background}

\end{document}